\newcommand{\ba}{\begin{align*}}
\newcommand{\ea}{\end{align*}}
\newcommand{\bi}{\begin{itemize}}
\newcommand{\ei}{\end{itemize}}
\newcommand{\hf}{\frac{1}{2}}
\newcommand{\bx} {\bar{x}}
\newtheorem{defin}{Definition}[section]
\newtheorem{thm}{Theorem}[section]
\newtheorem{prop}{Proposition}[section]
\newtheorem{rmk}{Remark}[section]
\newcommand{\lam}{\lambda}
\newcommand{\Ker}{{\rm Ker}}
\newcommand{\sgn}{{\rm sgn}}
\newcommand{\Rn}{\mathbb{R}^n}
\newcommand{\R}{\mathbb{R}}
\newcommand{\Rmn}{\mathbb{R}^{m\times n}}
\newcommand{\0}{\mathbf{0}}
\newcommand{\ones}{\mathbf{1}}
\newcommand{\xk}{x^{(k)}}
\newcommand{\T}{\mathrm{T}}
\begin{document}

\title{ITERATIVE $\ell_1$ MINIMIZATION FOR NON-CONVEX COMPRESSED SENSING}

\author{Penghang Yin
\thanks {Department of Mathematics, UC Irvine, Irvine,
CA 92697, USA, (penghany@uci.edu).}
\and Jack Xin \thanks {Department of Mathematics, UC Irvine, Irvine,
CA 92697, USA, (jxin@math.uci.edu).} }

\maketitle

\begin{abstract}
An algorithmic framework, based on the difference of convex functions algorithm (DCA), is proposed 
for minimizing a class of concave sparse metrics for compressed sensing problems. The resulting 
algorithm iterates a sequence of $\ell_1$ minimization problems.  
An exact sparse recovery theory is established to show that the proposed framework always 
improves on the basis pursuit ($\ell_1$ minimization) and inherits robustness from it. 
Numerical examples on success rates of sparse solution recovery illustrate further that, 
unlike most existing non-convex compressed sensing solvers 
in the literature, our method always out-performs basis pursuit, 
no matter how ill-conditioned the measurement matrix is. Moreover, 
the iterative $\ell_1$ (IL$_1$) algorithm lead by a wide margin 
the state-of-the-art algorithms on $\ell_{1/2}$ and logarithimic minimizations 
in the strongly coherent (highly ill-conditioned) regime, despite the same objective functions. Last but not least, in the application of magnetic resonance imaging (MRI), IL$_1$ algorithm easily recovers the phantom image with just 7 line projections.
\end{abstract}

\section{Introduction}
Compressed sensing (CS) techniques \cite{CT_05,CRT_06,candes2006stable,Don_06} 
enable efficient reconstruction of a sparse signal under linear measurements 
far less than its physical dimension.
Mathematically, CS aims to recover an $n$-dimensional vector $\bx\in\Rn$ with few non-zero components from an under-determined linear system $Ax = A\bx$ of just $m\ll n$ equations, where 
$A\in\Rmn$ is a known measurement matrix. The first CS technique is the convex $\ell_1$ minimization or the so-called basis pursuit \cite{DonHuo_01}:
\begin{equation}\label{L1}
\min_{x\in\Rn} \; \| x\|_1 \quad  \mbox{s.t.} \quad  Ax = A\bx.
\end{equation}
Breakthrough results \cite{CT_05} have established that 
when matrix $A$ satisfies certain restricted isometry property (RIP), 
the solution to (\ref{L1}) is exactly $\bx$. It was shown that 
with overwhelming probability, several random ensembles such as random Gaussian, random
Bernoulli, and random partial Fourier matrices, are of RIP type  \cite{CT_05, CDD, R}.   
Note that (\ref{L1}) is just a minimization principle rather than an algorithm for retrieving $\bx$. Algorithms for solving (\ref{L1}) and its associated $\ell_1$ regularization problem \cite{lasso}:
\begin{equation}\label{regL1}
\min_{x\in\Rn} \; \frac{1}{2}\|Ax - b\|^2 + \lam\|x\|_1
\end{equation}
include Bregman methods \cite{YOGD_08,splitBregman}, 
alternating direction algorithms \cite{Boyd_11,yall1,Ernie}, 
iterative thresholding methods \cite{Daub_04,FISTA} among others \cite{FPC}.

Inspired by the success of basis pursuit, researchers then 
began to investigate various non-convex CS models and algorithms. 
More and more empirical studies have shown that non-convex CS methods 
usually outperform basis pursuit when matrix $A$ is RIP-like, 
in the sense that they require fewer linear measurements to 
reconstruct signals of interest. Instead of minimizing $\ell_1$ norm, 
it is natural to consider minimization of non-convex (concave) sparse metrics, 
for instance, $\ell_q$ (quasi-)norm ($0<q<1$) \cite{chartrand08, CY, LXY}, 
capped-$\ell_1$ \cite{Zhang_08, LYX}, and transformed-$\ell_1$ \cite{transformed-l1,ZX}. 
Another category of CS methods in spirit rely on support detection of $\bx$. 
To name a few, there are orthogonal matching pursuit (OMP) \cite{OMP}, 
iterative hard thresholding (IHT) \cite{IHT}, (re)weighted-$\ell_1$ scheme \cite{rwl1}, 
iterative support detection (ISD) \cite{ISD}, and their variations \cite{CoSaMP,ZXWKQ, sorted}.

On the other hand, it has been proved that even if $A$ is not RIP-like and contains highly correlated columns, basis pursuit still enables sparse recovery under certain conditions of $\bx$ involving its support \cite{Super:candes2014}. In this scenario, most of the existing non-convex CS methods, however, are not that robust to the conditioning of $A$, as suggested by \cite{YLHX}. Their success rates will drop as columns of $A$ become more and more correlated. In \cite{YLHX}, based on the difference of convex functions 
algorithm (DCA) \cite{DCA:tao1997convex,DCA:tao1998dc}, the authors propose DCA-$\ell_{1-2}$ for minimizing the difference of $\ell_1$ and $\ell_2$ norms \cite{ELX, YEX}.
Extensive numerical experiments \cite{YLHX, LYHX, LYX} imply that DCA-$\ell_{1-2}$ algorithm consistently outperforms $\ell_1$ minimization, irrespective of the conditioning of $A$.

Stimulated by the empirical evidence found in \cite{YLHX, LYHX, LYX}, we propose a general DCA-based CS framework for the minimization of a class of concave sparse metrics. More precisely, 
we consider the reconstruction of a sparse vector $\bx\in\Rn$ by minimizing sparsity-promoting metrics: 
\begin{equation}\label{CS}
\min_{x\in\Rn}  P(|x|) \quad \mbox{s.t.} \quad Ax = A\bx.
\end{equation}
Throughout the paper, we assume that $P(x)$ always takes the form $\sum_{i=1}^n p(x_i)$ unless otherwise stated,  where $p$ defined on $[0,+\infty)$ satisfies:
\begin{itemize}
\item[$\bullet$] $p$ is concave and increasing. 
\item[$\bullet$] $p$ is continuous with the right derivative $p'(0+)>0$.
\end{itemize}
The first condition encourages zeros in $|x|$ rather than small entries, since $p$ changes rapidly around the origin; the second one is imposed for the good of the proposed algorithm, as will be seen later.
A number of sparse metrics in the literature enjoy the above properties, including smoothly clipped absolute
deviation (SCAD) \cite{SCAD}, capped-$\ell_1$, transformed-$\ell_1$, and of course $\ell_1$ itself. Although $\ell_q$ ($q\in(0,1)$) and logarithm functional do not meet the second condition, their smoothed versions $p(t) = (t+\varepsilon)^q$ and $p(t) = \log(t+\varepsilon)$ are differentiable at zero. These proposed properties will be essential in the algorithm design as well as in the proof of main results. 

Our proposed algorithm calls for solving a sequence of minimization subproblems. 
The objective of each subproblem is $\|x\|_1$ plus a linear term, which is convex and tractable. 
We further validate robustness of this framework, by showing theoretically and 
numerically that it performs at least as well as basis pursuit 
in terms of uniform sparse recovery, independent of the conditioning of $A$ and sparsity metric.  

The paper is organized as follows. In section 2, we overview RIP and coherence of sensing matrices, 
as well as descent property of DCA. In section  3, we provide the iterated $\ell_1$ framework for 
non-convex minimization, with worked out examples on representative sparse objectives including the total 
variation. In section 4, we prove the main exact recovery results based on unique recovery 
property of $\ell_1$ minimization instead of RIP, which forms a theoretical basis of 
the better performance of DCA. In section 5, we compare iterative $\ell_1$ algorithms 
with two state-of-the-art non-convex CS algorithms, IRLS-$\ell_q$ \cite{LXY} and IRL$_1$ \cite{rwl1}, 
and ADMM-$\ell_1$, in CS test problems with varying degree of coherence. We find that 
 iterative $\ell_1$ outperforms ADMM-$\ell_1$ independent of the sensing matrix coherence, and 
leads IRLS-$\ell_q$ \cite{LXY} and IRL$_1$ \cite{rwl1} in the highly coherent regime. This is 
consistent with earlier findings of DCA-$\ell_{1-2}$ algorithm \cite{YLHX, LYHX, LYX} to which 
our theory also applies. We also evaluate these two non-convex metrics on a two-dimensional example of reconstructing MRI from a small number of projections, our iterative $\ell_1$ algorithm succeed with 7 projections for both metrics. Using the same objective functions, the state-of-the-art algorithms need at least 10 projections.  Concluding remarks are in section 6.

\bigskip 

{\bf Notations.} Let us fix some notations. For any $x, y\in\Rn$, $\langle x,y \rangle = x^\T y$ is their inner product. 
$\0\in\Rn$ is the vector of zeros, and similar to $\ones$. $\circ$ is Hadamard (entry-wise) product, meaning that $x\circ y = \sum_i^n x_iy_i$. $I_m$ is the identity matrix of dimension $m$.
For any function $g$ on $\Rn$, $\nabla g(x)\in \partial g(x)$ is a subgradient of $g$ at $x$. The $\sgn(x)$ is the signum function on $\R^n$ defined as
\begin{equation*}
(\sgn(x))_i:=
\begin{cases}
\frac{x_i}{|x_i|} & \text{if} \; x_i\neq0,\\
0 & \text{if} \; x_i=0.
\end{cases}
\end{equation*}
For any set $\Omega\subseteq\Rn$, $\iota_{\Omega}(x)$ is given by
\begin{equation*}
\iota_{\Omega}(x):=
\begin{cases}
0 & \text{if} \; x\in\Omega,\\
\infty & \text{if} \; x\not\in\Omega.
\end{cases}
\end{equation*}

\section{Preliminaries}
The well-known CS concept during the past decade is the restricted isometry property (RIP) introduced by Cand\`{e}s \textit{et al.} \cite{CT_05}, which is used to characterize matrices that are nearly orthonormal.\begin{defin}
For each number $s$,
$s$-restricted isometry constant of $A$ is the smallest $\delta_s\in(0,1)$
such that
$$
(1-\delta_s)\|x\|_2^2 \leq \|A x\|_2^2 \leq (1+\delta_s)\|x\|_2^2
$$
for all $x\in\Rn$ with sparsity of $s$.
The matrix $A$ is said to satisfy the $s$-RIP with $\delta_s$.
\end{defin}
Mutual coherence \cite{DonHuo_01} is another commonly-used concept closely related to the success of CS task.

\begin{defin}
The coherence of a matrix $A$ is the maximum absolute value of the cross-correlations between the columns of $A$, namely
$$
\mu(A) := \max_{i\neq j} \dfrac{|A_i^\T A_j|}{\|A_i\|_2\|A_j\|_2}.
$$
\end{defin}

When matrix $A$ have small mutual coherence (incoherent) or small RIP constant, its columns tend to be more separated or distinguishable, which is intuitively favorable to identification of the supports of target signal. On the other hand, a highly coherent matrix with large coherence poses challenge to the reconstruction.

Next we give a brief review on the difference of convex functions algorithm (DCA). DCA has been widely applied to sparse optimization problems in several works \cite{YLHX, ELX, NoncvxDC, transformed-l1, LYX}.
For an objective function $F(x) = G(x) - H(x)$  on the space $\Rn$, where $G(x)$ and $H(x)$ are lower semicontinuous proper convex functions, we call $G-H$ a DC decomposition of $F$.

DCA takes the following form
\begin{equation*}\label{dcaiter}
\begin{cases}
y^{(k)} \in \partial H(\xk)\\
x^{(k+1)} = \arg\min_{x\in\Rn} G(x) - (H(\xk) + \langle y^k, x-x^k \rangle)
\end{cases}
\end{equation*}
Since $y^{(k)} \in \partial H(\xk)$, by the definition of subgradient, we have
$$
H(x^{k+1})\geq H(x^k) + \langle y^k, x^{k+1}-x^k \rangle.
$$
Consequently,
$$
G(\xk)-H(\xk)\geq G(x^{(k+1)}) - (H(\xk) + \langle y^{(k)}, x^{(k+1)}-\xk \rangle)\geq G(x^{(k+1)})- H(x^{(k+1)}).
$$
The fact that $x^{(k+1)}$ minimizes $G(x) - (H(\xk) + \langle y^k, x-\xk \rangle)$ was used in the first inequality above.
Therefore, DCA permits a decreasing sequence $\{F(\xk)\}$, leading to its convergence provided $F(x)$ is bounded from below.

\section{Iterative $\ell_1$ framework}
Our proposed iterative $\ell_1$ framework for solving (\ref{CS}) is built on $\ell_1$ minimization and DCA. Note that (\ref{CS}) can be equivalently written as 
$$
\min_{x\in\Rn}  P(|x|) + \iota_{\{x:Ax = A\bx\}}(x).
$$
We then rewrite the above objective in DC decomposition form:
$$
P(|x|) + \iota_{\{x:Ax = A\bx\}} (x)=  (p'(0+)\|x\|_1 + \iota_{\{x:Ax = A\bx\}} (x)) - (p'(0+)\|x\|_1 - \sum_{i=1}^np(|x_i|))
$$
Clearly the first term on the right-hand side is convex in terms of $x$. We show below that the second term is also a convex function. 

\begin{prop}\label{prop} 
$p'(0+)\|x\|_1 - \sum_{i=1}^np(|x_i|)$ is convex in $x$. 
\end{prop}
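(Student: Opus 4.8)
The plan is to exploit separability and reduce the claim to a one-dimensional statement. Since
$p'(0+)\|x\|_1 - \sum_{i=1}^n p(|x_i|) = \sum_{i=1}^n \bigl(p'(0+)|x_i| - p(|x_i|)\bigr)$
and a finite sum of convex functions is convex, it suffices to show that the scalar function $g(t) := c\,|t| - p(|t|)$ is convex on $\R$, where I abbreviate $c := p'(0+) > 0$.

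First I would write $g(t) = q(|t|)$ with $q(s) := c\,s - p(s)$ defined on $[0,+\infty)$, and establish two properties of $q$. Convexity of $q$ is immediate: $c\,s$ is linear and $-p(s)$ is convex because $p$ is concave, so their sum is convex. The second, and crucial, property is that $q$ is nondecreasing on $[0,+\infty)$. To verify this I would fix $0 \le a < b$ and bound the increment $p(b)-p(a)$. Since $p$ is concave it lies below its supporting lines, giving $p(b) - p(a) \le p'(a+)(b-a)$, where $p'(a+)$ is the right derivative (which exists for a concave function). Moreover, the right derivative of a concave function is nonincreasing, so $p'(a+) \le p'(0+) = c$. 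Combining, $p(b)-p(a) \le c(b-a)$, hence $q(b)-q(a) = c(b-a) - \bigl(p(b)-p(a)\bigr) \ge 0$. This is precisely where the hypothesis that $p'(0+)$ exists and equals the largest slope of $p$ enters; it is the reason the coefficient of $\|x\|_1$ in the DC decomposition is taken to be exactly $p'(0+)$ rather than anything smaller.

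Finally I would invoke the standard composition rule: if $\phi$ is convex and nondecreasing on an interval containing the range of a convex function $\psi$, then $\phi\circ\psi$ is convex. Taking $\phi = q$ and $\psi(t) = |t|$, which is convex with range $[0,+\infty)$, yields convexity of $g(t) = q(|t|)$ on $\R$, and summing over coordinates completes the argument.

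I expect the main obstacle to be the careful handling of the two places where differentiability can fail: the possible non-smoothness of the concave function $p$, and the kink of $|t|$ at the origin. Both are circumvented by working with one-sided derivatives and secant slopes rather than assuming $p\in C^1$. The monotonicity step is genuinely essential and not merely cosmetic: without it the composition with $|t|$ can destroy convexity (for instance $(|t|-1)^2$ is not convex near the origin even though $s\mapsto(s-1)^2$ is convex), so most of the care should go into establishing that $q$ is nondecreasing.
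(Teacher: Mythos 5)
Your proof is correct, and it takes a genuinely different route through the crucial step. Both you and the paper reduce to the scalar case by separability and observe that $q(s) = p'(0+)s - p(s)$ is convex on $[0,\infty)$; the divergence is in how convexity of $q(|\cdot|)$ on all of $\R$ is then obtained. The paper verifies the convexity inequality for $q(|\cdot|)$ by hand, splitting into the case where $t_1, t_2$ share a sign (immediate) and the case of opposite signs, where it uses only the weaker fact $q(t) \ge q(0)$ for $t>0$ (from $p(t) \le p(0) + p'(0+)t$) together with a somewhat delicate convex-combination estimate. You instead prove the stronger structural fact that $q$ is nondecreasing on $[0,\infty)$ --- via the nonincreasing (one-sided) derivatives, or equivalently secant slopes, of the concave $p$, all bounded by $p'(0+)$ --- and then invoke the standard rule that a nondecreasing convex function composed with a convex function is convex. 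What your approach buys: it is modular and cites a textbook lemma rather than an ad hoc computation, and it makes transparent exactly why the coefficient in the DC decomposition must be $p'(0+)$ (the maximal slope of $p$); your counterexample $(|t|-1)^2$ also correctly isolates monotonicity as the load-bearing hypothesis. What the paper's approach buys: it is fully self-contained, requiring no composition theorem, and it formally gets by with the weaker property $q(t)\ge q(0)$ --- though since $q$ is convex, that property is in fact equivalent to your monotonicity claim, so the two proofs rest on the same underlying fact organized differently.
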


\begin{proof}
For notational convenience, define $f(t) := p'(0+)t-p(t)$ on $[0,\infty)$. Since p is concave on $[0,\infty)$, we have that $f$ is convex on $[0,\infty)$. We only need to show that $f(|\cdot|)$ is convex on $\mathbb{R}$, or equivalently, for all $t_1,t_2\in\R$, $a\in(0,1)$,
$$
f(|at_1+(1-a)t_2|)\leq af(|t_1|) + (1-a)f(|t_2|).
$$  

{\bf Case 1.} If $t_1$ and $t_2$ have the same sign or one of them is $0$. Since $f(|at_1+(1-a)t_2|) = f(a|t_1|+(1-a)|t_2|)$ and $f$ is convex on $[0,\infty)$, then the above inequality holds. 

{\bf Case 2.} If $t_1$ and $t_2$ are of the opposite sign.  By the concavity of $p$ on $[0,\infty)$, we have
$$
p(t)\leq p(0) + p'(0+)t, \; \forall t>0,
$$
that is, $f(t)\geq f(0)$ for all $t>0$. Without loss of generality, we suppose $a|t_1|\geq (1-a)|t_2|$. Then
\begin{align*}
& f(|at_1+(1-a)t_2|) = f(a|t_1|-(1-a)|t_2|)\\
\leq & \frac{(1-a)(|t_1|+|t_2|)}{|t_1|}f(0)+\frac{a|t_1|-(1-a)|t_2|}{|t_1|}f(|t_1|)\\
\leq & (1-a)f(|t_2|) + \frac{(1-a)|t_2|}{|t_1|}f(|t_1|) +\frac{a|t_1|-(1-a)|t_2|}{|t_1|}f(|t_1|)\\
= & af(|t_1|) + (1-a)f(|t_2|)
\end{align*}
In the first inequality above, we used the convexity of $f$ on $[0,\infty)$, whereas in the second one, we used the fact that $f(t)\geq f(0)$ for $t>0$.
\end{proof}

At the $(k+1)^{\mathrm{th}}$ iteration, DCA calls for linearization of the second convex term at the current guess $x^{(k)}$, and solving the resulting convex subproblem for $x^{(k+1)}$. After converting back the linear constraint and removing the constant and the factor of $p'(0+)$, we iterate:
\begin{equation}\label{DCA}
x^{(k+1)} = \arg\min_{x} \|x\|_1 - \langle R(\xk),x\rangle \quad \mbox{s.t.} \quad Ax = A\bx,
\end{equation}
where 
$$
R(x):= \sgn(x)\circ(\ones- \frac{P'(|x|)}{p'(0+)})\in\partial(\|\cdot\|_1-\frac{P(|\cdot|)}{p'(0+)})(x).
$$ 
Be aware that $P'(|x|)\in\partial P(\cdot)(|x|)$ denotes subgradient of $P$ at $|x|$ rather than subgradient of $P(|\cdot|)$ at $x$. In this way, the subproblem reduces to minimizing $\|x\|_1$ plus a linear term of $x$, which can be effciently solved by a variety of state-of-the-art algorithms for basis pursuit (with minor modifications).
In Table \ref{table}, we list some non-convex metrics and the corresponding iterative $\ell_1$ algorithm. 

\begin{table}[htbp]%\footnotesize
\caption{Examples of sparse metrics and associated iterative $\ell_1$ scheme}\label{table}
\centering
\begin{tabular}{p{80pt}p{100pt}p{60pt}p{120pt}}
\toprule
sparse metric & $p(t)$ & $p'(0+)$ & $(R(x))_i$\\
\midrule
capped-$\ell_1$ & $\min\{t,\theta\},\;\theta>0$ & 1 & $\sgn(x_i)\iota_{|x_i|\geq \theta}$\\
transformed-$\ell_1$ & $\frac{(\theta+1)t}{t+\theta},\;\theta>0$ & $\frac{\theta+1}{\theta}$ & $\sgn(x_i)(1-(\frac{\theta}{|x_i|+\theta})^2)$\\
smoothed log & $\log(t+\varepsilon),\;\varepsilon>0$ & $\frac{1}{\varepsilon}$ & $\sgn(x_i)(1-\frac{\varepsilon}{|x_i|+\varepsilon})$\\
smoothed $\ell_q$ & $(t+\varepsilon)^q,\;\varepsilon>0$ & $q\varepsilon^{q-1}$ & $\sgn(x_i)(1-(\frac{\varepsilon}{|x_i|+\varepsilon})^{1-q})$\\
\bottomrule
\end{tabular}
\end{table}
For initialization, we take $x^{(0)}=R(x^{(0)})=0$, which is basically $\ell_1$ minimization. The proposed algorithm is thus summarized in Algorithm \ref{alg1} below. Due to the descending property of DCA, Algorithm \ref{alg1} produces a convergent sequence $\{P(\xk)\}$. Beyond that, we shall not prove any stronger convergence result on the iterates $\{x^k\}$ itself in this paper. The reason is that the convergence analysis may vary individually by choice of sparse metric. We refer the readers to \cite{YLHX} and \cite{ZX}, in which subsequential convergence of $\{\xk\}$ is established for DCA-$\ell_{1-2}$ and DCA-transformed-$\ell_1$ respectively.

\begin{algorithm}
\caption{Iterative $\ell_1$ minimization}
Initialize: $x^{(0)} = \0$.
\begin{algorithmic}\label{alg1}
\FOR {$k = 1,2,\dots$}
\STATE $y^{(k)} = \sgn(\xk)\circ(\ones- \frac{P'(|\xk|)}{p'(0+)})$
\STATE $x^{(k+1)} = \arg\min_{x} \|x\|_1 - \langle y^{(k)},x\rangle \quad \mbox{s.t.} \quad Ax = A\bx$
\ENDFOR
\end{algorithmic}
\end{algorithm}

{\bf Extensions.} Two natural extensions of (\ref{CS}) are regularized model:
\begin{equation}\label{reg}
\min_{x\in\Rn}  \hf\|Ax - b\|_2^2 + \lambda P(|Dx|),
\end{equation}
and denoising model:
\begin{equation}\label{denoise}
\min_{x\in\Rn}  P(|Dx|) \quad \mbox{s.t.} \quad \|Ax - b\|_2\leq \sigma,
\end{equation}
where $b$ is the measurement, $D$ is a general matrix, and $\lambda, \sigma>0$ are parameters. They find applications in magnetic resonance imaging \cite{CRT_06}, total variation denoising \cite{TV} and so on. We can show that DC decomposition of $P(|Dx|)$ is 
\begin{equation}\label{DC}
P(|Dx|) = p'(0+)\|Dx\|_1 - (p'(0+)\|Dx\|_1 - P(|Dx|)).
\end{equation}
The iterative $\ell_1$ frameworks are detailed in Algorithms \ref{alg2} and \ref{alg3} respectively. 

\begin{algorithm}
\caption{Iterative $\ell_1$ regularization}
Initialize: $x^{(0)} = \0$.
\begin{algorithmic}\label{alg2}
\FOR {$k = 1,2,\dots$}
\STATE $y^{(k)} = D^{\T}\left(\sgn(D\xk)\circ(\ones- \frac{P'(|D\xk|)}{p'(0+)})\right)$
\STATE $x^{(k+1)} = \arg\min_{x} \hf\|Ax-b\|_2^2 + \lambda p'(0+) (\|Dx\|_1 - \langle y^{(k)},x\rangle)$ 
\ENDFOR
\end{algorithmic}
\end{algorithm}

\begin{algorithm}
\caption{Iterative $\ell_1$ denoising}
Initialize: $x^{(0)} = \0$.
\begin{algorithmic}\label{alg3}
\FOR {$k = 1,2,\dots$}
\STATE $y^{(k)} = D^{\T}\left(\sgn(D\xk)\circ(\ones- \frac{P'(|D\xk|)}{p'(0+)})\right)$
\STATE $x^{(k+1)} = \|Dx\|_1- \langle y^{(k)},x\rangle \quad \mbox{s.t.} \quad \|Ax - b\|_2\leq \sigma$
\ENDFOR
\end{algorithmic}
\end{algorithm}

\section{Recovery results}
Although in general global minimum is not guaranteed in minimization, we can show that its performance is provably robust to the conditioning of measurement matrix $A$, by proving that it always tends to sharpen $\ell_1$ solution.

Let us take another look at the assumptions on $p$ which were crucial in the proof of Proposition \ref{prop}. Since $p$ is concave and increasing on $[0,\infty)$, we have 
$$
0\leq(\frac{P'(|x|)}{p'(0+)})_i\leq1, \quad \forall x\in\R, 1\leq i\leq n,
$$ 
and thus $\|R(x)\|_\infty\leq1$. Now we are ready to show the main results.

\begin{thm}[Support-wise uniform recovery]\label{uniform}
Let $T\subseteq\{1,\dots,n\}$ be an arbitrary but fixed index set. 
If basis pursuit uniquely recovers all $\bx$ supported on $T$, so does (\ref{DCA}). 
\end{thm}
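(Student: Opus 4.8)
The plan is to convert the hypothesis into a null space property (NSP) and then show that, once the algorithm reaches $\bx$, every subsequent instance of $(\ref{DCA})$ keeps $\bx$ as its unique solution. First I would record the standard characterization: basis pursuit $(\ref{L1})$ uniquely recovers every $\bx$ with $\mathrm{supp}(\bx)\subseteq T$ if and only if
\[
\|h_T\|_1 < \|h_{T^c}\|_1 \qquad \text{for every } h\in\Ker(A)\setminus\{\0\}.
\]
Sufficiency follows by splitting $\|\bx+h\|_1$ over $\mathrm{supp}(\bx)$ and its complement and applying the reverse triangle inequality; necessity is seen by feeding the adversarial signal $\bx=-h_T$ to basis pursuit whenever a null vector violates the bound, which makes $\bx+h$ a competing optimum. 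I may therefore assume this strict NSP throughout.

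Next I would exploit the initialization. Since $x^{(0)}=\0$ we have $R(x^{(0)})=\0$, so the first instance of $(\ref{DCA})$ is exactly basis pursuit and yields $x^{(1)}=\bx$ by hypothesis. The theorem then reduces to an inductive claim: if $\xk=\bx$, then $(\ref{DCA})$ with weight $w:=R(\bx)$ has $\bx$ as its unique minimizer, so $x^{(k+1)}=\bx$. To prove this single-step statement I would write any feasible competitor as $\bx+h$ with $h\in\Ker(A)$ and estimate, for $F(x):=\|x\|_1-\langle w,x\rangle$,
\[
F(\bx+h)-F(\bx) \;\ge\; \sum_{i\in S}\bigl(\sgn(\bx_i)-w_i\bigr)h_i + \|h_{S^c}\|_1, \qquad S:=\mathrm{supp}(\bx),
\]
using the subgradient inequality $|\bx_i+h_i|-|\bx_i|\ge\sgn(\bx_i)h_i$ and the fact that $w$ is supported on $S$ (each $(R(\bx))_i$ carries the factor $\sgn(\bx_i)$ and hence vanishes off $S$). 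The crucial algebraic point is that, on $S$,
\[
\sgn(\bx_i)-w_i = \sgn(\bx_i)\,\Bigl(\tfrac{P'(|\bx|)}{p'(0+)}\Bigr)_i =: c_i\,\sgn(\bx_i), \qquad c_i\in[0,1],
\]
the containment $c_i\in[0,1]$ being exactly the bound $0\le (P'(|x|)/p'(0+))_i\le 1$ recorded just before the theorem. Bounding $\sum_{i\in S}c_i\sgn(\bx_i)h_i\ge-\|h_S\|_1$ and using $S\subseteq T$ then gives $F(\bx+h)-F(\bx)\ge\|h_{T^c}\|_1-\|h_T\|_1>0$ for $h\ne\0$, which is the desired uniqueness.

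The main thing to watch is the interplay between supports. The NSP surplus $\|h_{T^c}\|_1-\|h_T\|_1$ survives only because $w$ is supported inside $T$; a weight with a coordinate of modulus one off $T$ would cancel the matching term of $\|h_{T^c}\|_1$ and break the estimate. This is precisely why the induction is needed rather than a one-shot argument: seeding the recursion at $x^{(1)}=\bx$ keeps every iterate supported on $S\subseteq T$, so $R(\xk)$ never leaks outside $T$. I would accordingly isolate the single-step uniqueness as the core lemma and let the induction, started by the basis-pursuit initialization, propagate $\bx$ through all iterations; since $\bx$ supported on $T$ was arbitrary, this delivers support-wise uniform recovery for $(\ref{DCA})$.
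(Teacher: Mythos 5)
Your proposal is correct and follows essentially the same route as the paper's proof: the null space property plus the zero initialization gives $x^{(1)}=\bx$, and then the subgradient splitting of $\|\bx+h\|_1$ over $T$ and $T^c$, combined with the bound $0\le (P'(|\bx|)/p'(0+))_i\le 1$, shows each subsequent subproblem has $\bx$ as its unique minimizer. The only differences are cosmetic: you make explicit the induction over $k$ and the reduction of the hypothesis to the NSP, both of which the paper leaves implicit by treating only the step from $x^{(1)}$ to $x^{(2)}$.
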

\begin{proof}
By the assumption that basis pursuit uniquely recovers all $\bx$ supported on $T$, and by the 
well-known null space property \cite{FR_13} for $\ell_1$ minimization, we must have 
$$
\|h_T\|_1<\|h_{T^c}\|_1, \; \forall h\in\Ker(A)\setminus\{\0\},
$$
and $x^{(1)} = \bx$ in (\ref{DCA}). The $2^{\mathrm{nd}}$ step of DCA reads
$$
x^{(2)} = \arg\min \|x\|_1 - \langle R(\bx),x \rangle \quad \mbox{s.t.} \quad Ax = A\bx.
$$ 
Let $x^{(2)} = \bx + h^{(2)}$, then 
\begin{align*}
&\|\bx\|_1 - \langle R(\bx),\bx \rangle \geq \|\bx +h^{(2)}\|_1 -\langle R(\bx),\bx+h^{(2)} \rangle \\
\Longrightarrow & \|\bx\|_1 - \langle R(\bx),\bx \rangle \geq \|\bx\|_1 + \langle \sgn(\bx), h_T^{(2)}\rangle +\|h^{(2)}_{T^c}\|_1 -\langle R(\bx),\bx+h^{(2)}_T \rangle \\
\Longleftrightarrow & -\langle \sgn(\bx) - R(\bx),h^{(2)}_T \rangle\geq \|h^{(2)}_{T^c}\|_1 \\
\Longleftrightarrow & -\langle \sgn(\bx)\circ\frac{P'(|\bx|)}{p'(0+)},h^{(2)}_T \rangle\geq \|h^{(2)}_{T^c}\|_1
\end{align*}
Since $\|\frac{P'(|\bx|)}{p'(0+)}\|_\infty \leq 1$, we have $\|h^{(2)}_T\|_1\geq\|h^{(2)}_{T^c}\|_1$. As a result, $h^{(2)}$ must be $\0$.
\end{proof}

If nonzero entries of $\bx$ have the same magnitude, a stronger result holds that (\ref{DCA}) recovers any fixed signal whenever basis pursuit does.
\begin{thm}[Recovery of equal-height signals]\label{single} Let $\bx$ be a 
signal with equal-height peaks supported on $T$, i.e. $|x_i| = |x_j|, \forall i,j\in T$. 
If the basis pursuit uniquely recovers $\bx$, so does (\ref{DCA}).
\end{thm}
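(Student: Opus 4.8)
The plan is to follow the template of the proof of Theorem \ref{uniform}, replacing the \emph{uniform} null space property by the weaker, signal-specific characterization of unique $\ell_1$ recovery, and then exploiting the equal-height hypothesis to compensate for that loss. As before, since $R(\0)=\0$, the first iterate $x^{(1)}$ is exactly the basis pursuit solution, which equals $\bx$ by assumption; and once an iterate coincides with $\bx$ the same reasoning reproduces it, so it suffices to show that the second step returns $x^{(2)}=\bx$. Writing $x^{(2)}=\bx+h^{(2)}$ with $h^{(2)}\in\Ker(A)$, I would run the identical algebraic manipulation of the optimality inequality for (\ref{DCA}) used in Theorem \ref{uniform}, arriving at
$$-\Big\langle \sgn(\bx)\circ\tfrac{P'(|\bx|)}{p'(0+)},\, h^{(2)}_T\Big\rangle \;\geq\; \|h^{(2)}_{T^c}\|_1.$$

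Next I would record what unique recovery of the \emph{single} vector $\bx$ buys us. Unlike the uniform setting, this does not yield the full null space property; instead it is equivalent to the individual exact-recovery condition
$$\big|\langle \sgn(\bx), h_T\rangle\big| < \|h_{T^c}\|_1, \qquad \forall\, h\in\Ker(A)\setminus\{\0\},$$
which follows from the first-order (directional derivative) characterization of a strict $\ell_1$ minimizer: a nonzero null-space direction with vanishing directional derivative would produce a second minimizer, so strictness is forced. This is the only place where the hypothesis ``basis pursuit uniquely recovers $\bx$'' is used.

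The crux is to convert the weighted inequality into the unweighted one, and here the equal-height assumption is decisive. If $|x_i|=\alpha$ for every $i\in T$, then $\big(P'(|\bx|)/p'(0+)\big)_i$ takes the \emph{same} value $c:=p'(\alpha)/p'(0+)$ at every coordinate of $T$, while concavity of $p$ together with $p'(0+)>0$ forces $c\in[0,1]$. Consequently the weight factors out of the inner product (the off-support coordinates contribute nothing since $\sgn(x_i)=0$ there), and the displayed inequality collapses to $-c\,\langle \sgn(\bx), h^{(2)}_T\rangle \geq \|h^{(2)}_{T^c}\|_1$.

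To finish, suppose $h^{(2)}\neq\0$. If $c>0$, dividing by $c$ and using $1/c\geq 1$ gives $-\langle \sgn(\bx), h^{(2)}_T\rangle \geq \tfrac{1}{c}\|h^{(2)}_{T^c}\|_1 \geq \|h^{(2)}_{T^c}\|_1$, contradicting the strict individual recovery condition above. The degenerate cases are routine: if $c=0$, or if $h^{(2)}_{T^c}=\0$, the inequality forces $h^{(2)}$ to be supported on $T$, whence $|\langle\sgn(\bx),h^{(2)}_T\rangle|<\|h^{(2)}_{T^c}\|_1=0$ is impossible, so again $h^{(2)}=\0$. Thus $x^{(2)}=\bx$, and induction completes the argument. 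I expect the main subtlety to be the observation that equal heights collapse the coordinate-wise weights $P'(|\bx|)/p'(0+)$ to a single scalar $c\le 1$, since it is precisely the factor $1/c\ge 1$ that upgrades the DCA optimality inequality into the strict uniqueness condition; checking the $c=0$ and $h^{(2)}_{T^c}=\0$ corner cases is the only other point needing care.
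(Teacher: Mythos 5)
Correct, and essentially the paper's own argument: you derive the same signal-specific null space inequality from uniqueness of basis pursuit, run the same DCA optimality algebra as in Theorem \ref{uniform}, and use the equal-height hypothesis to collapse the weights $\big(P'(|\bx|)/p'(0+)\big)_i$ on $T$ to a single scalar $c\le 1$. If anything, your ending is slightly more careful than the paper's: by keeping the two-sided condition $|\langle \sgn(\bx), h_T\rangle| < \|h_{T^c}\|_1$ you cleanly dispose of the degenerate case $c=0$ (which forces $h^{(2)}_{T^c}=\0$), a case the paper's terse final contradiction (``non-negative and less than'') passes over, since it implicitly assumes $-\langle \sgn(\bx), h^{(2)}_T\rangle>0$.
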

\begin{proof}
If basis pursuit uniquely recovers $\bx$, then for all $h\in\Ker(A)\setminus\{\0\}$,  $\|\bx\|_1<\|\bx+h\|_1=\|\bx+h_T\|_1 + \|h_{T^c}\|_1$. This implies that for all $h\in\Ker(A)\setminus\{\0\}$ and $\|h\|_\infty \leq \min_{i\in T}|\bx_i|$,  $\|\bx\|_1<\|\bx+h_T\|_1 + \|h_{T^c}\|_1 = \|\bx\|_1 + \langle \sgn(\bx), h_T\rangle +\|h_{T^c}\|_1$.
So for all $h\in\Ker(A)\setminus\{\0\}$ and $\|h\|_\infty \leq \min_{i\in T}|\bx_i|$, we have $-\langle \sgn(\bx), h_T\rangle<\|h_{T^c}\|_1$.

Therefore, 
\begin{equation}\label{nsp}
-\langle \sgn(\bx), h_T\rangle<\|h_{T^c}\|_1, \; \forall h\in\Ker(A)\setminus\{\0\},
\end{equation}
and also $x^{(1)} = \bx$. 

We let $x^{(2)} = \bx + h^{(2)}$, and suppose that $h^{(2)}\neq\0$. Repeating the argument in Theorem \ref{uniform} and by (\ref{nsp}), we arrive at
$$
-\langle \sgn(\bx)\circ\frac{P'(|\bx|)}{p'(0+)},h^{(2)}_T \rangle\geq \|h^{(2)}_{T^c}\|_1 > -\langle\sgn(\bx), h_T^{(2)}\rangle.
$$
Since peaks of $\bx$ have equal height, $(\frac{P'(|\bx|)}{p'(0+)})_i\in[0,1)$ is a constant for all $i\in T$. So $-\langle \sgn(\bx)\circ\frac{P'(|\bx|)}{p'(0+)},h^{(2)}_T \rangle$ is non-negative and less than $-\langle \sgn(\bx), h^{(2)}_T\rangle$, which leads to a contradiction.
\end{proof}

\begin{rmk}
Although the conditions proposed in section 1 are not applicable to the metric $\ell_{1-2}$ since it is not separable, it is not hard to generalize these conditions and iterative $\ell_1$ algorithm to accommodate this case. The resulting algorithm is exactly DCA-$\ell_{1-2}$ in \cite{YLHX}. We can also readily extend the recovery theory to DCA-$\ell_{1-2}$, with $P(x) = \|x\|_1-\|x\|_2$ and
\begin{equation*}
R(x)=
\begin{cases}
\frac{x}{\|x\|_2} &  \text{if} \; x\neq\0,\\
\0 & \text{if} \; x=\0.
\end{cases}
\end{equation*}
Theorem \ref{uniform} provides a theoretical explanation for the experimental observations made in \cite{YLHX,LYHX,LYX} that DCA-$\ell_{1-2}$ performs consistently better than $\ell_1$ minimization.
\end{rmk}

\section{Numerical experiments}
\subsection{Exact recovery of sparse vectors}
We reconstruct sparse vector $\bx$ using iterative $\ell_1$ algorithm (Algorithm \ref{alg2} with $D = I_n$) for minimizing the regularized model (\ref{reg}) with smoothed $\ell_q$ norm (IL$_1$-$\ell_q$) and smoothed logarithm functional (IL$_1$-log), and compare them with two state-of-the-art non-convex CS algorithms, namely IRLS-$\ell_q$ \cite{LXY} and IRL$_1$ \cite{rwl1}. Note that IRLS-$\ell_q$ and IRL$_1$ attempt to minimize $\ell_q$ and logarithm, respectively, and both involve a smoothing strategy in minimization. So it would be particularly interesting to compare IL$_1$-$\ell_q$ with IRLS-$\ell_q$, and IL$_1$-log with IRL$_1$. $q=0.5$ is chosen for IRLS-$\ell_q$ and IL$_1$-$\ell_q$ in all experiments. We shall also include ADMM-$\ell_1$ \cite{Boyd_11} for solving $\ell_1$ regularization (LASSO) in comparison.  

Experiments are carried out as follows. We first sample a sensing matrix $A\in\Rmn$, and generate a test signal $\bx\in\Rn$ of sparsity $s$ supported on a random index set with i.i.d. Gaussian entries. We then compute the measurement $A\bx$ and apply each solver to produce a reconstruction $x^*$ of $\bx$. The reconstruction is called a success if 
$$\frac{\|x^*-\bx\|_2}{\|\bx\|_2}<10^{-3}.$$
We run 100 independent realizations and record the corresponding success rates at different sparsity levels.

{\bf Matrix for test.} We test on random Gaussian matrix whose columns satisfy
$$
A_i \overset{{\rm i.i.d.}}{\sim} \mathcal{N}(\0,I_m/m), \quad i = 1,\cdots,n
$$
Gaussian matrices are RIP-like and have uncorrelated (incoherent) columns. For Gaussian matrix, we choose $m=64$ and $n=256$.

We also use more ill-conditioned sensing matrix of significantly higher coherence. Specifically, a randomly oversampled partial DCT matrix $A$ is defined as
$$
A_i = \frac{1}{\sqrt{m}}\cos(2i\pi\xi/F), \quad i = 1,\cdots,n
$$
where $\xi\in \mathbb{R}^m \sim \mathcal{U}([0,1]^m)$ whose components are uniformly and independently sampled from [0,1]. $F\in \mathbb{N}$ is the refinement factor. Coherence $\mu(A)$ goes up as $F$ increases. In this setting, it is still possible to recover the sparse vector $\bx$ if its spikes are sufficiently separated. Specifically, we randomly select a $T$ (support of $\bx$) so that
$$
\min_{j,k\in T}|j-k| \geq L,
$$
where $L$ is called the minimum separation. It is necessary for $L$ to be at least 1 Rayleigh length (RL) which is unity in the frequency domain \cite{Fann_12, Don_92}. In our case, the value of 1 RL equals $F$.
The testing matrix $A\in\mathbb{R}^{100\times1500}$, i.e. $m= 100$, $n = 1500$. We test at three coherence levels with $F = 5, 10, 15$. Note that $\mu(A)\approx 0.95$ for $F= 5$, $\mu(A)\approx 0.998$ for $F= 10$, and $\mu(A)\approx 0.9996$ for $F= 15$. We also set $L = 2F$ in experiments.

{\bf Algorithm implementation.}
For ADMM-$\ell_1$, we let $\lam = 10^{-6}$, $\beta = 1$ , $\rho = 10^{-5}$, $\epsilon^{\mathrm{abs}}=10^{-7}$, $\epsilon^{\mathrm{rel}}=10^{-5}$, and the maximum number of iterations \verb"maxiter" = 5000 \cite{Boyd_11,YLHX}. For IRLS-$\ell_q$, \verb"maxiter" = 1000, \verb"tol"$=10^{-8}$. For rewighted $\ell_1$, the smoothing parameter $\varepsilon$ is adaptively updated as introduced in \cite{rwl1}, and the outer iteration criterion is stopped if the relative error between two consecutive iterates is less than $10^{-2}$.
The weighted $\ell_1$ minimization subproblems is solved by the YALL1 solver (available at http://yall1.blogs.rice.edu/). The tolerance for YALL1 was set to $10^{-6}$. All other settings of the algorithms are set to default ones.

For IL$_1$-$\ell_q$, we let $\lambda = 10^{-6}$, and the smoothing parameter
$\varepsilon = \max\{\frac{|x^{(1)}|_{(d)}}{3}, 0.01\}$,
where $x^{(1)}$ is the output from the first iteration, which is also the solution to LASSO. $|x|_{(d)}$ denotes the $d^{\mathrm{th}}$ largest entry of $|x|$. We set $d$ to $\lfloor\frac{m}{4}\rfloor$.
For IL$_1$-log, $\varepsilon = \max\{|x^{(1)}|_{(d)}, 0.01\}$.
The subproblems are solved by alternating direction method of multipliers (ADMM), which is detailed in \cite{YLHX}. The parameters for solving subproblems are the same as that for ADMM-$\ell_1$.
 
{\bf Interpretation of results.} The plot of success rates is shown in Figure \ref{fig}. 
When $A$ is Guassian, we see that all non-convex CS solvers are comparable 
and much better than ADMM-$\ell_1$, with IRLS-$\ell_q$ being the best. 
For oversampled DCT matrices, we see that the success rates of IRLS-$\ell_q$ and IRL$_1$ drop 
as $F$ increases, whereas the proposed IL$_1$-$\ell_q$ and IL$_1$-log are robust and 
consistently outperform ADMM-$\ell_1$. 
\begin{figure}
\begin{center}
\begin{tabular}{cc}
Gaussian & DCT, F = 5  \\
\includegraphics[width=0.49\textwidth]{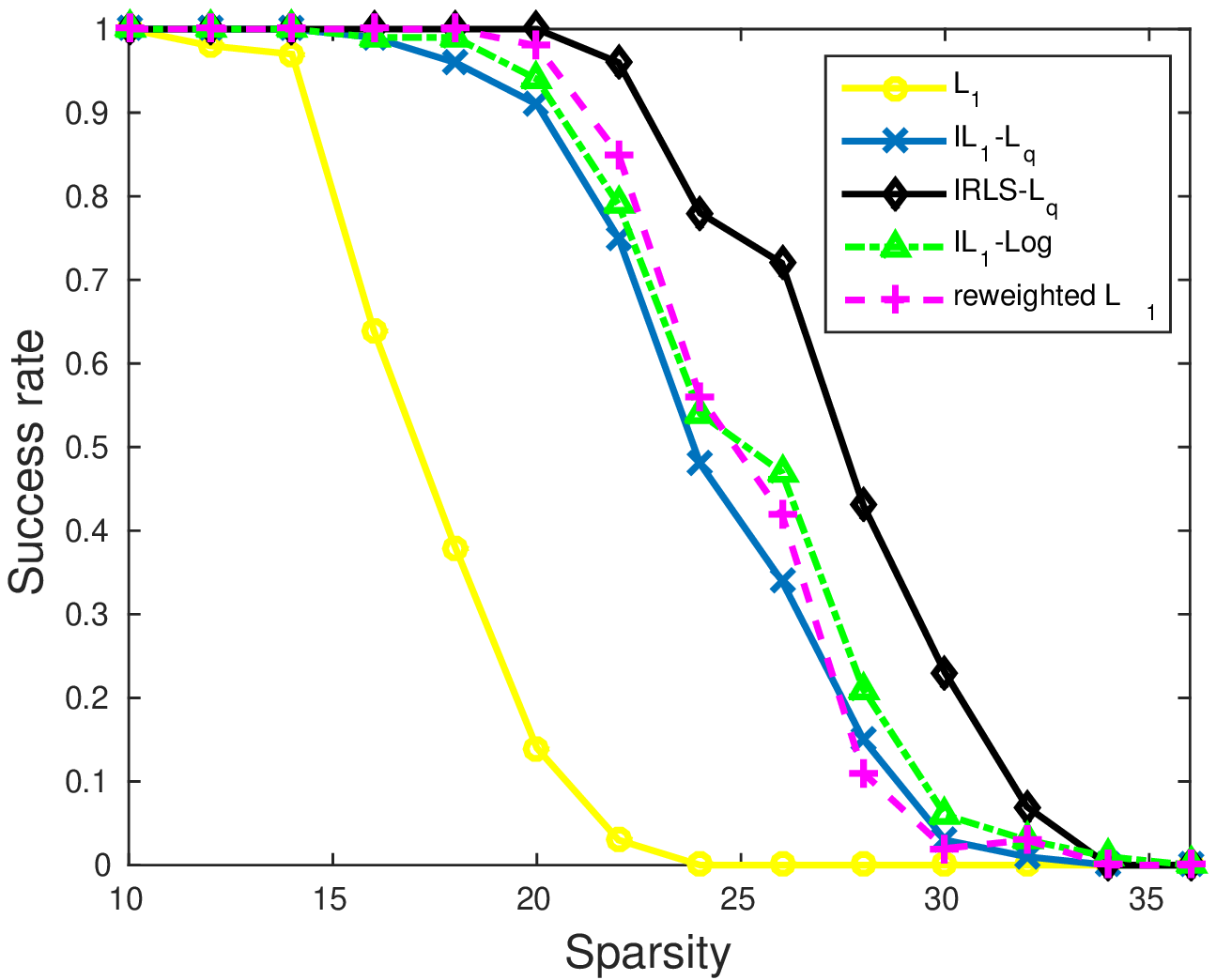}&
\includegraphics[width=0.49\textwidth]{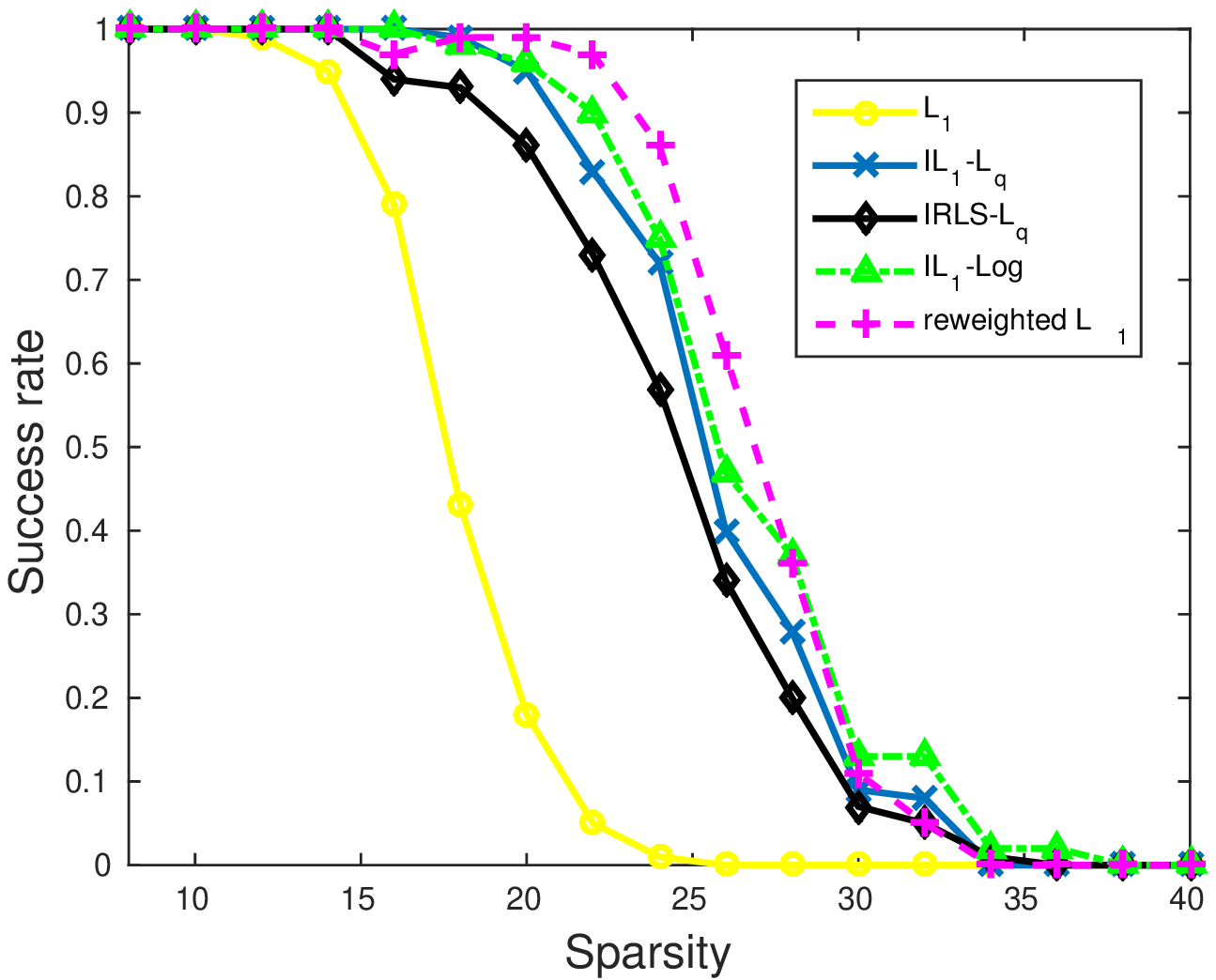}\\
DCT, F = 10 & DCT, F = 15 \\
\includegraphics[width=0.49\textwidth]{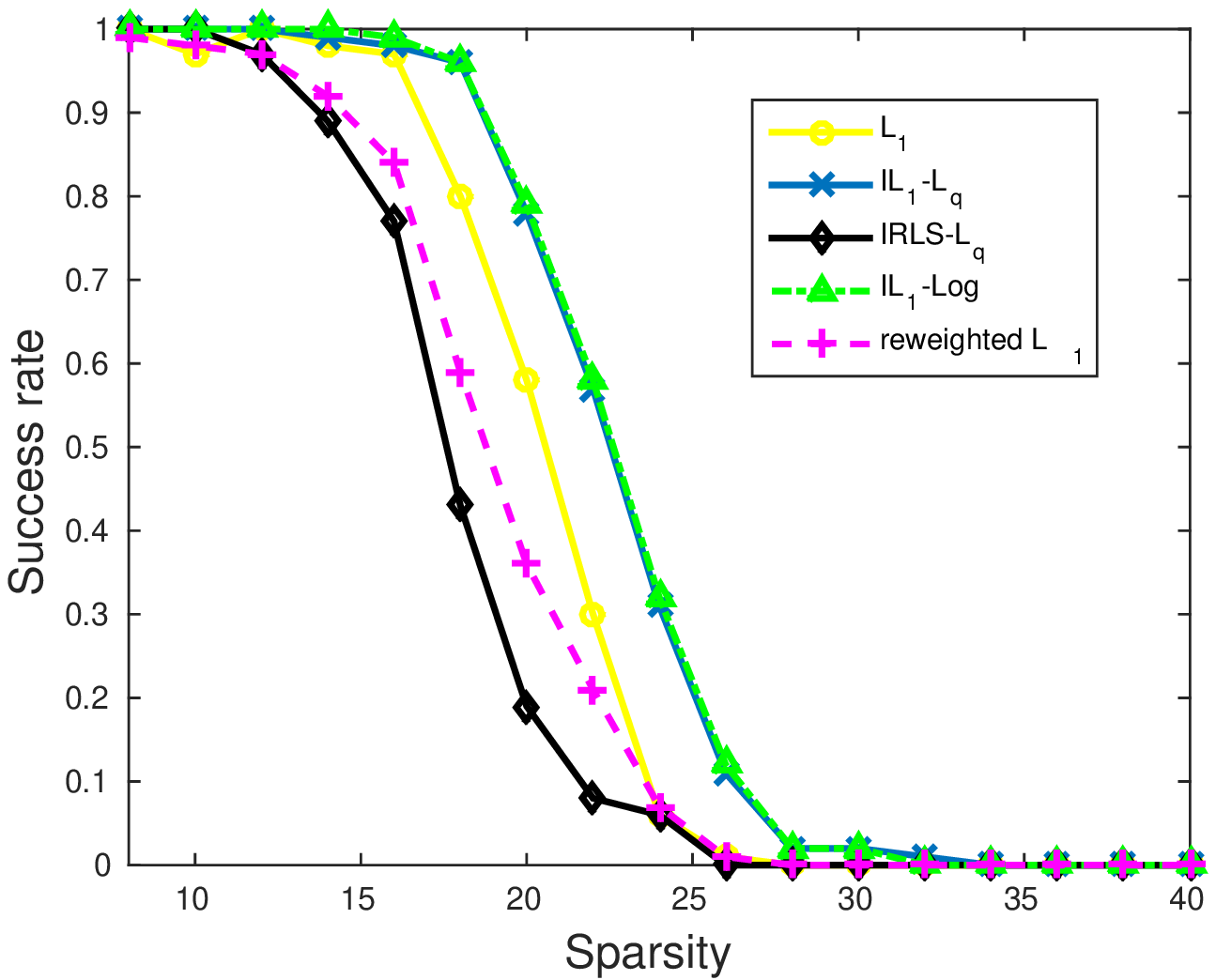}&
\includegraphics[width=0.49\textwidth]{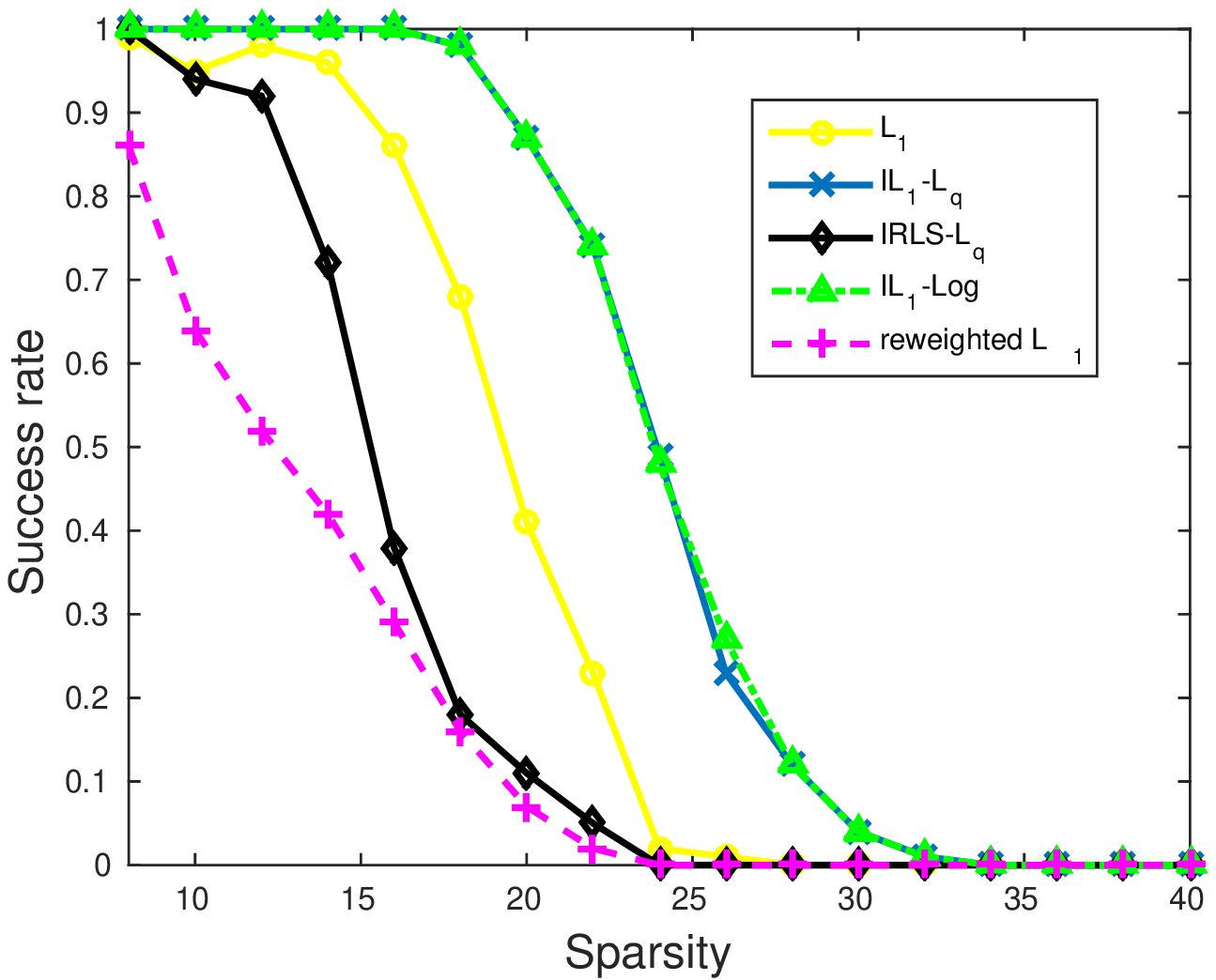}
 \\
\end{tabular}
\caption{Plots of success rates for comparing the iterative $\ell_1$ with other CS algorithms under 
the increasing coherence of the sensing matrices.} \label{fig}
\end{center}
\end{figure}

\subsection{MRI reconstruction}
We present an example of reconstructing the shepp-Logan phantom image of size $256\times 256$, to further demonstrate effectiveness of IL$_1$ algorithm. In this application, the sparsity of the gradient of the image/signal denoted by $u$ is exploited, which leads to the following minimization problem:
$$
\min_u P(|\nabla u|) \quad \mbox{s.t.} \quad S\mathcal{F}u = b,
$$
where $S$ denotes the sampling mask in the frequency domain, and $\mathcal{F}$ is the Fourier transform and $b$ the acquired data.
With $P(|\cdot|)$ being the $\ell_1$ norm, the above formulation reduces to the celebrated total variation (TV) minimization:
$$
\min_{u} \|\nabla u\|_1 \quad \mbox{s.t.} \quad S\mathcal{F}u = b.
$$
The above unconstrained problem together with its regularized problem 
\begin{equation}\label{reg_TV}
\min_{u} \hf\|S\mathcal{F}u -b\|_2^2 + \lambda \|\nabla u\|_1,
\end{equation}
can be sovled efficiently by split Bregman method \cite{splitBregman}, known to be equivalent to ADMM \cite{Ernie}. For general sparse metric $P$ (or $p$), (\ref{DC}) gives the DC decomposition
$$
P(|\nabla u|) = p'(0+)\|\nabla u\|_1 - (p'(0+)\|\nabla u\|_1 - P(|\nabla u|)),
$$
and thus its IL$_1$ algorithm for solving the regularized model takes the following form
\begin{equation*}
\begin{cases}
w^{(k)} =  \nabla^{\T}\left(\sgn(\nabla u^{(k)})\circ(\ones- \frac{P'(|\nabla u^{(k)}|)}{p'(0+)})\right)\\
u^{(k+1)} = \arg\min_{u} \hf\|S\mathcal{F}u -b\|_2^2 + \lambda p'(0+) (\|\nabla u\|_1 - \langle w^{(k)},u\rangle). \\
\end{cases}
\end{equation*}
Likewise the subproblem for updating $u^{(k+1)}$ above can also be solved by split Bregman, as the objective only differs by a linear term compared with (\ref{reg_TV}).

{\bf Numerical results.} In the experiment, we again choose $p$ to be the smoothed $\ell_q$ ($q = \hf$) and smoothed log respectively for the IL$_1$ algorithm, and set smoothing parameter $\varepsilon = 0.1$ and regularization parameter $\lambda = 10^{-6}$ for both implementations. The reconstruction results are shown in Figure \ref{fig2}. We find that 7 sampled projections are sufficient for both of the two penalties to recover the phantom image perfectly, in comparison to $\ell_1$ (TV) minimization which needs 10 projections for perfect image reconstruction by split Bregman. To the best of our knowledge, the other existing non-convex solvers for minimizing either $\ell_q$ or log penalties did no better than 10 projections \cite{rwl1, chartrand_09, chartrand_07}.  

\begin{figure}
\begin{center}
\begin{tabular}{cc}
7 sampled projections & $\ell_1$, rel. error =  0.4832 \\
\includegraphics[width=0.49\textwidth]{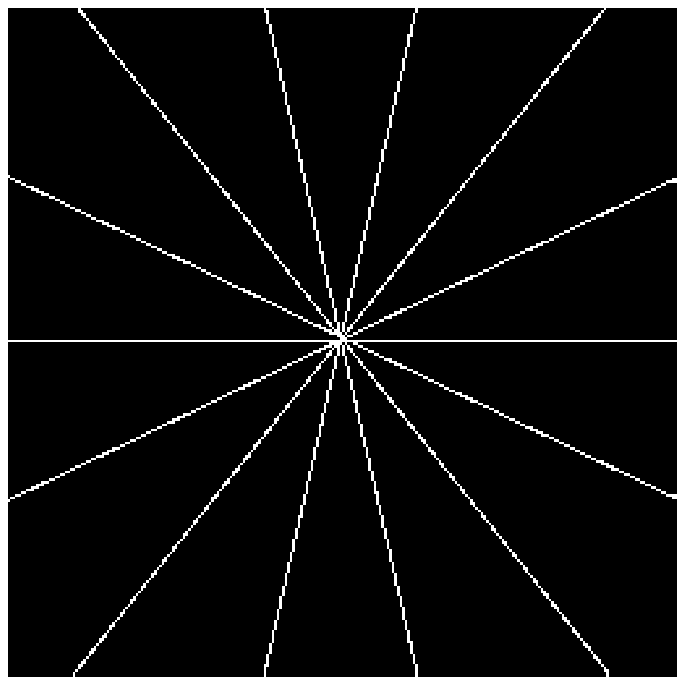}&
\includegraphics[width=0.49\textwidth]{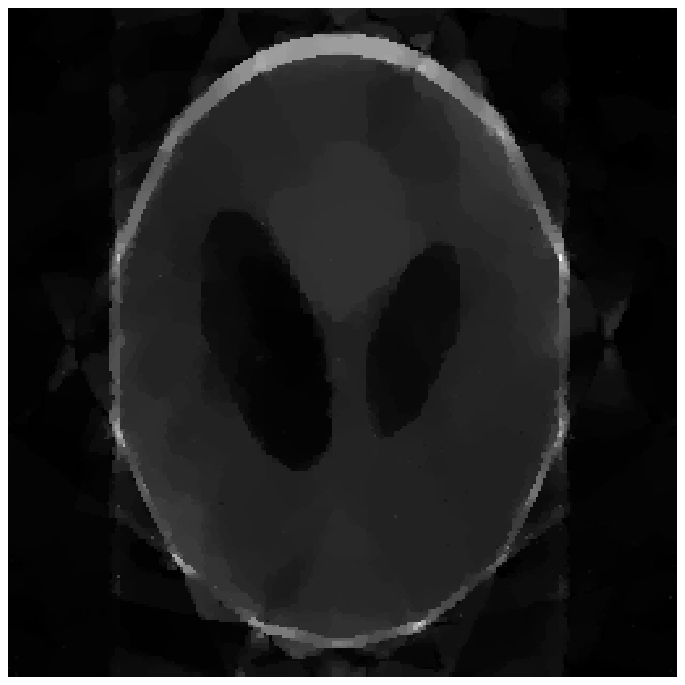}\\
IL$_1$-$\ell_q$, rel. error = $3.6\times10^{-9}$ & IL$_1$-log, rel. error = $4.8\times10^{-9}$ \\
\includegraphics[width=0.49\textwidth]{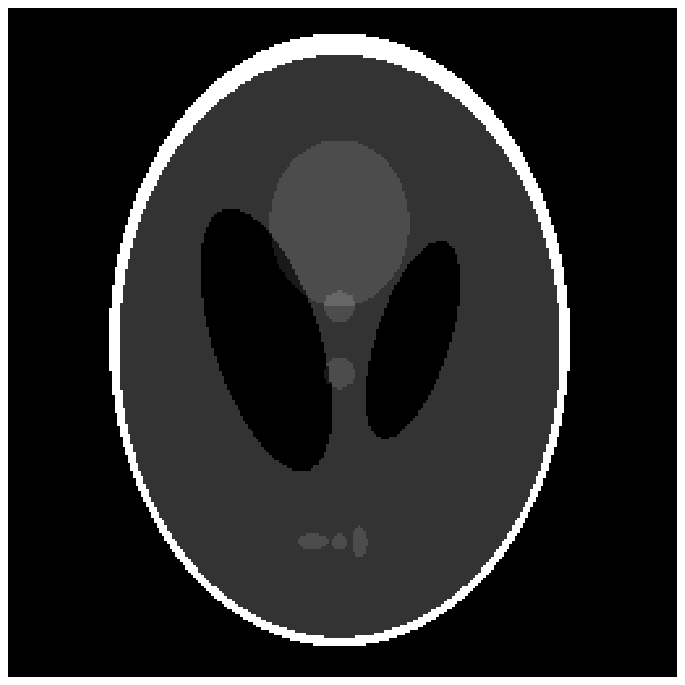}&
\includegraphics[width=0.49\textwidth]{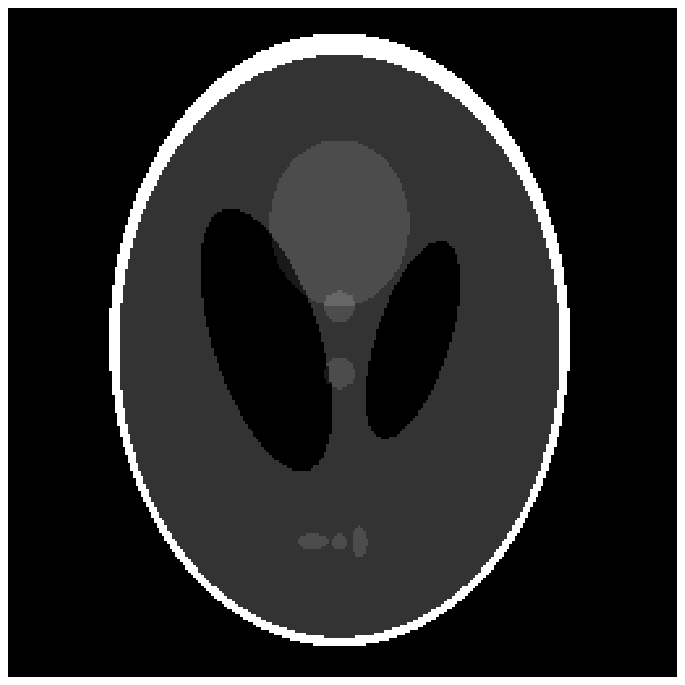}
 \\
\end{tabular}
\caption{Sampled lines and reconstructions for Shepp-Logan phatom image.} \label{fig2}
\end{center}
\end{figure}

\section{Conclusions}
We developed an iterative $\ell_1$ framework for a broad class of Lipschitz continuous 
non-convex sparsity promoting objectives, including those arising in statistics. 
The iterative $\ell_1$ algorithm is shown via theory and computation 
to improve on the $\ell_1$ minimization for CS problems independent 
of the coherence of the sensing matrices.

\bigskip

\noindent {\bf Acknowledgments.} The authors would like to thank 
Yifei Lou (University of Texas at Dallas) and Jong-Shi Pang (University of Southern California) 
for helpful discussions. The authors would also like to thank anonymous reviewers for their helpful comments.The work was partially supported by NSF grants DMS-1222507 and DMS-1522383.

\bigskip

\end{document}